\providecommand{\U}[1]{\protect\rule{.1in}{.1in}}
\newtheorem{theorem}{Theorem}
\newtheorem{conjecture}[theorem]{Conjecture}
\newenvironment{proof}[1][Proof]{\textbf{#1.} }{\ \rule{0.5em}{0.5em}}
\begin{document}

\title{\textbf{Rotating states }\\\textbf{ in driven clock- and XY-models}}
\author{Christian Maes\\Instituut voor Theoretische Fysica \\K.U.Leuven, Belgium \\christian.maes@fys.kuleuven.be
\and Senya Shlosman\\Centre de Physique Th\'{e}orique UMR 6207\\Lunimy, Marseille\\shlosman@cpt.univ-mrs.fr\\Inst. of the Information Transmission Problems\\Moscow, Russia\\shlos@iitp.ru}
\maketitle

\begin{abstract}
We consider 3D active plane rotators, where the interaction between the spins
is of XY-type and where each spin is driven to rotate. For the clock-model,
when the spins take $N\gg1$ possible values, we conjecture that there are two
low-temperature regimes. At very low temperatures and for small enough drift
the phase diagram is a small perturbation of the equilibrium case. At larger
temperatures the massless modes appear and the spins start to rotate
synchronously for arbitrary small drift. For the driven XY-model we prove that
there is essentially a unique translation-invariant and stationary
distribution despite the fact that the dynamics is not ergodic.

\textbf{Keywords:} soft modes, nonequilibrium dynamics

\end{abstract}

\section{Introduction}

Understanding nonequilibrium phase transitions is a major challenge of
statistical physics, bearing many different aspects. Today not much of a
systematic theory exists, with few experimental and even fewer mathematical
results. One important question of this multifaceted subject is to describe
the changes to the equilibrium phase diagram when a steady nonequilibrium
driving is added. The best known examples are driven diffusive lattice gases
-- like boundary driven and asymmetric exclusion processes, see e.g. the books
\cite{SZ,DM}, including many results of computer simulations. The
nonequilibrium there originates from installing differences in chemical
potentials or from adding nonconservative external fields, and follows the
prescription of local detailed balance, \cite{KLS}. In the present paper a
uniform nonequilibrium force drives the \textit{internal} degrees of freedom,
the planar rotating spins. The spins take values on the unit circle and are
placed on the sites of a regular lattice. When mutually uncoupled, all spins
undergo the same non-reversible Markov evolution with a bias in the direction
of rotation. The interaction couples nearest neighbors, $x\sim y$, with energy
following the XY-model,
\begin{equation}
H(\varphi)=-\sum_{x\sim y}\cos\left(  \varphi_{x}-\varphi_{y}\right)
\label{01v}%
\end{equation}
for \textquotedblleft angles\textquotedblright\ $\varphi_{x},x\in
\mathbb{Z}^{D}$. The possible values of the angles\ determine the nature of
the spins. A first choice is to take $\varphi_{x}=2\pi k/N,k=1,2,\ldots,N$ on
the discrete circle with $N$ possible values. For $N=2$ that is the Ising
model; for $N=3$ it is equivalent with the $q=3$ Potts model. The second
possible choice is formally obtained in the $N\uparrow+\infty$ limit, and has
a continuum of values $\varphi_{x}\in\lbrack0,2\pi]$. That truly corresponds
to the XY-model where the spins are plane rotators having unit length.
\newline
%For
%large $N$ the equilibrium phase diagram corresponding to \eqref{01v} gets more
%complicated and also more rich with the appearance of soft modes.
The purpose of the paper is to discuss the modification of the phase diagram
when a nonequilibrium driving is inserted that induces biased rotation of the
spins over the circle. By doing so the following phenomena can be
addressed:\newline a) the uniqueness of the stationary distribution
accompanied by\textit{ breakdown of ergodicity} -- in the sense that some
initial data do not relax;\newline b) the presence of macroscopic dynamical
coherence;\newline c) the stability of equilibrium phases against small
nonequilibrium driving;\newline We first briefly introduce each of these
points, to realize them more concretely in later sections.

\subsection{Unique stationary distribution without ergodicity}

Are there stochastic dynamics with a unique stationary distribution, which are
not ergodic? Without any further restrictions this question is easy and not
very interesting, with the answer being `yes'. For discrete time a simple
example is given by the two-state Markov chain
\begin{equation}
+1\rightarrow-1\rightarrow+1\rightarrow\ldots,\label{pm}%
\end{equation}
flipping deterministically at each time. Similarly, in continuous time we can
take the rotation over the circle with a constant angular speed $v$:
\begin{equation}
\theta(t)=\theta(0)+vt\ \operatorname{mod}2\pi\label{colrot}%
\end{equation}
One wonders whether one can construct non-degenerate random processes, which
exhibit the above prototypical behavior. Of course, we must then consider
infinite-volume interacting particle systems and infinite probabilistic
cellular automata, since finite state non-degenerate Markov processes are
always ergodic.

For the case of probabilistic cellular automata (discrete time parallel
updating of spins), a construction, mimicking the behavior \eqref{pm} is
presented in a recent paper by \cite{CM}. However, the example of \cite{CM}
still has some degeneracy, because for every time $T$ one can present two
local events, $A$ and $B,$ such that the transition probability $p_{T}\left(
A|B\right)  $ in $T$ steps vanishes. Thus, we feel that a truly non-degenerate
discrete time example is still missing. We believe that the discrete time
version of our 3D driven clock model gives such non-degenerate example.

Our constructions below present the case of rotating interacting spins in
continuous time. The rotation speed $v$ in \eqref{colrot} will be induced by
the nonequilibrium driving, and the angle $\theta$ should be thought of as the
order parameter, or collective phase, of the model. The dynamics will be
nondegenerate, as local fluctuations in the phase are allowed. See Theorem
\ref{maint} for the precise result.

\subsection{Macroscopic coherence}

Not surprisingly, the mechanism above connects with the old but still not
completely resolved question of whether macroscopic dynamical coherence or
pattern formation in spatially extended systems can be obtained by local
translation-invariant and non-degenerate updating of spins and whether that is
even possible for a continuous time (sequential) dynamics and for dynamics
that satisfy detailed balance, \cite{DK,DM}. In fact, an example with that
flavor was recently described in \cite{RSV}. There an infinite queuing network
was considered, with several types of clients and with exponential service
times. In the high load regime the system exhibits coherent behavior. That
means that if the initial state of the network is close to the `coherent' one,
characterized by a given value of the `phase' observable (which takes values
on the circle), then in the process of evolution this phase evolves with a
constant speed, is never `forgotten', and the initial synchronization is never
broken. Still the system has a unique stationary distribution, with the phase
being uniformly distributed over the circle. In the language of queuing
networks it is an example of violation of the Poisson hypothesis. Yet, this
example lives not on a lattice with short range interactions, but on a
mean-field graph (which is, in some sense, an infinite complete
graph).\newline There is also a vast literature on the emergence of
synchronized rotators using variants of the so called Kuramoto model; for a
review, see \cite{AB}. Recently, a mean field analysis for active rotator
models was carried out in \cite{GG}, which for some choice of the drift is the
mean field version of the model we consider later in \eqref{coudriv}. We
believe that rotating states emerge in low temperature uniformly driven
$N$-clock models, if $N$ is sufficiently large. See the conjectures in Section
\ref{clock}.

\subsection{Stability of equilibrium phases}

When the equilibrium model has finitely many macroscopic phases in some regime
of its parameters, then we expect that these remain in place for small
driving. Below a critical driving, the changes in basin of attraction and in
macroscopic appearance will be small. To understand the nature of that
critical driving, one must realize that the stability of equilibrium phases
requires some sort of free energy barriers, or, in particle language, the
excitations must be massive. Therefore, the presence of soft modes, or
Goldstone bosons, can break stability. In the context above, that means that
the critical driving (the minimum we need to truly disturb the phase diagram)
will go down as $N\uparrow+\infty$. See conjectures \ref{h2}--\ref{h3} for
more precise speculations.\newline

The following Section \ref{clockdef} contains the details of the
nonequilibrium model -- driven $N$-clock models -- together with a summary of
the situation in equilibrium. Section \ref{clock} is devoted to what we
believe happens for finite $N$; these are mostly a collection of conjectures
in which we firmly believe but where the proofs are missing. Some of this is
remedied in Section \ref{cir} for the driven XY-model where the picture is
more complete. The main result is that the 3D driven XY-model shows
nonergodicity, while having a unique stationary translation invariant
distribution, at (almost) all low temperatures.

\section{The $N$-Clock Model}

\label{clockdef}

The $N$-clock model is an interacting particle system that lives on
$\mathbb{Z}^{3}.$ At each site $x\in\mathbb{Z}^{3}$ there is a spin
$\sigma_{x}\in\mathbb{Z}_{N},$ where $\mathbb{Z}_{N}\subset\mathbb{S}%
^{1}\subset\mathbb{C}^{1}$ is the group of $N$-th roots of unity. Each spin
$\sigma_{x}$ has its clock, and when the clock rings, the spin jumps to one of
the two `nearest' values: $\sigma_{x}\rightarrow\sigma_{x}^{\pm}=\exp\left\{
\pm\frac{2\pi i}{N}\right\}  \sigma_{x}.$ That is equivalent to introducing
the angles $\varphi_{x}$ with $\sigma_{x}=\exp i\varphi_{x},$ $\ \varphi
_{x}=2\pi k/N,$ $k=1,2,\ldots,N$, and moves $\varphi_{x}\rightarrow\varphi
_{x}\pm2\pi/N$. In what follows we will use the notation $\zeta\left(
\varphi\right)  $ for $\exp\left\{  i\varphi\right\}  ,$ $\zeta\left(
\varphi\right)  \in\mathbb{S}^{1}\subset\mathbb{C}^{1},$ so in particular
\[
\sigma_{x}\rightarrow\sigma_{x}^{\pm}=\zeta\left(  \pm\frac{2\pi}{N}\right)
\sigma_{x}.
\]
The particles are interacting, with the energy given by \eqref{01v}. We define
the rates $c\left(  x,\sigma,\pm\right)  $ of the jumps $\sigma_{x}%
\rightarrow\sigma_{x}^{\pm}$ of the spin $\sigma_{x}$ in the environment
$\sigma$ at inverse temperature $\beta$ by
\begin{equation}
c\left(  x,\sigma,\pm\right)  =p_{\pm}\exp\left\{  \frac{\beta}{2}\sum_{y\sim
x}\left[  \cos\left(  \varphi_{x}-\varphi_{y}\pm\frac{2\pi}{N}\right)
-\cos\left(  \varphi_{x}-\varphi_{y}\right)  \right]  \right\}  \label{02}%
\end{equation}
(sum over nearest neighbors $y$ of $x\in\mathbb{Z}^{3}$), where the numbers
$p_{+}\geq p_{-}>0$ are two extra parameters. Their difference is measured by
$d:=\log\big(p_{+}/p_{-}\big)\geq0$ and is called the drift. One can imagine
it as the coupling of the planar rotator with a magnetic field that acts
perpendicular to the plane.
%\textbf{ }The other prefactor $\psi
%_{x}(\varphi)$ is an arbitrary local, translation invariant and positive
%function that is symmetric under the exchange $\varphi_{x}\leftrightarrow
%\varphi_{x}\pm2\pi/N$ \textbf{what does it mean? }and that we take to be
%invariant under a global rotation $\varphi+w$ mod $2\pi$.
We call the above model the `$N$-Clock model' with a drift. (Of course, the
drift does not make sense for $N=2$ (Ising model).)

We first describe the properties of the symmetric Clock model, when the drift
$d=0.$ Note that in this case the evolution defined above satisfies detailed
balance. That is what we call the equilibrium or symmetric Clock model. Then,
the Gibbs measures for \eqref{01v} are reversible stationary measures. We can
thus use the results of \cite{FILS}, theorem 4.6:

\begin{theorem}
\label{t1} (symmetric Clock model) There exists a value $\beta_{0}$ of the
inverse temperature, such that for every $\beta>\beta_{0}$ and every $N\geq2,$
the symmetric $N$-Clock model has at least $N$ different extremal stationary
distributions, $\left\langle \cdot\right\rangle _{\zeta_{k},\beta},$
$\zeta_{k}=\zeta\left(  \frac{2\pi k}{N}\right)  ,k=1,...,N.$ These states are
translation-invariant, exhibit long-range order and are magnetized:%
\[
\left\langle \sigma_{x}\right\rangle _{\zeta_{k},\beta}=m_{N}\left(
\beta\right)  e^{\frac{2\pi ik}{N}},\text{ with }m_{N}\left(  \beta\right)
>0\text{ for }\beta>\beta_{0}.
\]
(Here we interpret the spins $\sigma_{x}$ as elements of $\mathbb{C}^{1}.$)
\end{theorem}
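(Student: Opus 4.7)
The plan is to reduce the statement to the FILS machinery \cite{FILS} that is already explicitly invoked in the excerpt, so the work consists of (i) verifying the hypotheses and (ii) extracting exactly $N$ extremal translation-invariant Gibbs states from the resulting spontaneous symmetrybreaking. The starting observation is that the XY-type Hamiltonian \eqref{01v} is reflection positive with respect to hyperplanes in $\mathbb{Z}^{3}$ that are either perpendicular to a lattice axis through sites, or through bonds (here it matters that the coupling is ferromagnetic and that $\cos(\varphi_x-\varphi_y)=\cos\varphi_x\cos\varphi_y+\sin\varphi_x\sin\varphi_y$ factorizes as a sum of products of single-site functions on either side of the reflecting plane). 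The restriction of the $N$-clock spin values $\varphi_x\in\tfrac{2\pi}{N}\mathbb{Z}_N$ preserves the factorization since it is a pointwise constraint on each site, so reflection positivity is inherited.

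Having reflection positivity, the plan is to apply the Gaussian-domination/infrared-bound argument. First I would introduce a shifted Hamiltonian $H_\alpha(\varphi)=-\sum_{x\sim y}\cos(\varphi_x-\varphi_y-\alpha_{xy})$ with $\alpha_{xy}$ attached to the oriented bond, show $Z(\alpha)\leq Z(0)$ by the standard reflection-positivity chaining, and differentiate twice at $\alpha=0$ to obtain the infrared bound for the two-component spin field $S_x=(\cos\varphi_x,\sin\varphi_x)$,
\begin{equation}
\widehat{G}_\beta(p)\;\leq\;\frac{1}{2\beta\,\widehat{E}(p)},\qquad \widehat{E}(p)=\sum_{j=1}^{3}\bigl(1-\cos p_j\bigr).
\end{equation}
Since $D=3$, the function $1/\widehat{E}(p)$ is integrable over the Brillouin zone, so the sum rule
\begin{equation}
\frac{1}{|\Lambda|}\sum_{x\in\Lambda}\langle S_0\cdot S_x\rangle_{\Lambda,\beta}^{\mathrm{per}}\;=\;\langle|m_\Lambda|^2\rangle_{\Lambda,\beta}^{\mathrm{per}}+\frac{1}{|\Lambda|}\sum_{p\neq 0}\widehat{G}_\beta(p)
\end{equation}
combined with $\langle S_0\cdot S_0\rangle =1$ forces $\langle |m_\Lambda|^2\rangle\geq 1-C/\beta$ for some universal $C$, whence for all sufficiently large $\beta$ (independent of $N$, which is what gives the uniform threshold $\beta_0$) there is a strictly positive value $m_N(\beta)>0$ giving a lower bound on the spontaneous magnetization.

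To pass from $\langle|m|^2\rangle>0$ to the $N$ extremal states, I would break the $\mathbb{Z}_N$ symmetry by adding a small external field $h\,\mathrm{Re}\,\bar{\zeta}_k\sigma_x$ favoring direction $\zeta_k=\zeta(2\pi k/N)$, taking first the thermodynamic limit and then $h\downarrow 0$. The $\mathbb{Z}_N$ covariance of the interaction guarantees that the $N$ limits obtained this way are translated into one another by global rotation of the clock, so each carries magnetization $m_N(\beta)\,e^{2\pi ik/N}$ and they are mutually distinct. Translation invariance is inherited from the periodic boundary conditions, and extremality follows from the standard fact that rotational extremal decomposition of the Gibbs simplex projects each of these $N$ states onto an extremal component with the prescribed expectation of $\sigma_0$.

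The only real obstacle is the verification of reflection positivity with the clock-valued (discrete) spins rather than continuous $[0,2\pi)$-valued ones; everything else is routine once that is in place. The quickest route is to observe that the discrete constraint can be implemented by multiplying the a priori measure by the indicator $\prod_x \mathbf{1}[\varphi_x\in\tfrac{2\pi}{N}\mathbb{Z}_N]$, which is a product of single-site factors and therefore factorizes across any reflection hyperplane through sites, so the reflection positivity of the continuous XY measure descends automatically to its $N$-clock restriction. Once that observation is made, the rest of the proof is a direct transcription of the arguments of \cite{FILS}.
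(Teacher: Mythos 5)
Your proposal is correct and is essentially the paper's own argument: the paper proves Theorem \ref{t1} simply by invoking Theorem 4.6 of \cite{FILS}, and your sketch (reflection positivity surviving the restriction to the discrete single-site measure, Gaussian domination, the $N$-independent infrared bound in $D=3$, and $\mathbb{Z}_N$ symmetry breaking to extract the $N$ magnetized extremal states) is precisely the content of that citation, including the key point that $\beta_0$ is uniform in $N$. One cosmetic slip: the left-hand side of your sum rule should be $\langle S_0\cdot S_0\rangle=\frac{1}{|\Lambda|}\sum_{p}\widehat{G}_\beta(p)$ rather than $\frac{1}{|\Lambda|}\sum_{x\in\Lambda}\langle S_0\cdot S_x\rangle$, which by translation invariance is $\langle|m_\Lambda|^2\rangle$ itself; your subsequent use of $\langle S_0\cdot S_0\rangle=1$ shows you intended the correct identity.
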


Observe that the finite $\beta_{0}$ above remains the same for all $N\geq2$,
which makes the Theorem unreachable for the standard low-temperature analysis
based on the Peierls condition. For example, the Pirogov-Sinai theory
\cite{PS, S} would establish the stability of the $N$ ground states only for
$\beta\geq\beta_{N}^{PS},$ where $\beta_{N}^{PS}\rightarrow\infty$ as
$N\rightarrow\infty.$ The reason for that is not purely technical: indeed, in
the domain of validity of the Pirogov-Sinai theory one necessarily has
additional properties of the pure phases, such as the exponential decay of the
truncated correlation functions. However, we believe that in reality such
exponential decay holds only for low enough temperatures, $\beta>\beta_{N}%
^{G},$ with $\beta_{N}^{G}\rightarrow\infty$ as $N\rightarrow\infty,$ so the
PS-method can not be improved to reach $\beta_{0}.$ Moreover, we think that
the following is true:

\begin{conjecture}
\label{h1} There exists a value $\bar{N},$ such that for each $N\geq\bar{N}$
the 3D symmetric $N$-clock model undergoes two phase transitions. Namely, for
all $\beta>\beta_{N}^{G}$ it has $N$ pure magnetized phases, with exponential
decay of truncated correlations, with $\beta_{N}^{G}\rightarrow\infty$ as
$N\rightarrow\infty.$ For smaller intermediate values of $\beta,$ $\beta
_{N}^{G}>\beta>\beta_{N}^{cr}$ it also has at least $N$ pure phases with
non-zero magnetization; however, the correlation decay in these phases is only
algebraic. (A stronger recent conjecture of \cite{EKO} even talks about a
\textit{continuum} of pure phases, $\left\langle \cdot\right\rangle
_{\zeta,\beta},$ $\zeta\in\mathbb{S}^{1}.$) Finally, for $\beta<\beta_{N}%
^{cr}$ the model has only one Gibbs state, again with exponential decay of correlations.
\end{conjecture}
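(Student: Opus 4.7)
The plan is to attack the three conjectured regimes of Conjecture \ref{h1} by three separate techniques. For $\beta>\beta_N^G$ I would run a refined Pirogov--Sinai expansion around each of the $N$ ground states; for the intermediate window $\beta_N^{cr}<\beta<\beta_N^G$ I would combine reflection positivity with a spin-wave analysis; and for $\beta<\beta_N^{cr}$ I would close the argument with a high-temperature cluster expansion or the Dobrushin uniqueness criterion. The two thresholds should correspond, respectively, to the scale at which the discrete $\mathbb{Z}_N$ pinning becomes comparable with typical spin-wave fluctuations, and to the (essentially $N$-independent) critical inverse temperature $\beta_{\mathrm{XY}}^{c}$ of the 3D XY model, which one expects $\beta_N^{cr}$ to approach as $N\to\infty$.

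In the very low temperature regime a naive Peierls estimate gives only $\beta_N^{PS}\sim N^{2}$: the minimal angular jump is of order $1/N$, so a contour of area $A$ costs of order $\beta A/N^{2}$. To reach the conjectured, more slowly diverging $\beta_N^G$ one must first separate the Gaussian spin-wave fluctuations around each ground state from the genuine topological defects (vortex loops and $\mathbb{Z}_N$ domain walls), integrating the former out exactly in a multi-scale analysis of Fr\"ohlich--Spencer type and running a polymer expansion only on the latter. Exponential decay of truncated correlations in each of the $N$ magnetized phases would then follow once the surviving defects have a positive mass.

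In the intermediate window, the existence of at least $N$ magnetized phases for $\beta>\beta_0$ is already given by Theorem \ref{t1}; the novel content is the \emph{algebraic} decay of truncated correlations. The picture is that the $\mathbb{Z}_N$-breaking perturbation of the effective XY action is irrelevant in the renormalization-group sense, so inside each magnetized phase the model should behave like a 3D XY model in a symmetry-broken state and inherit the $|x-y|^{-(D-2)}$ decay of transverse correlations carried by the Goldstone mode. Rigorously I would try to compare, via Gaussian domination and chessboard estimates, the truncated two-point function with that of a lattice Gaussian free field whose effective mass is controlled by the residual $\mathbb{Z}_N$ pinning, and then to show that this effective mass vanishes throughout $\beta_N^{cr}<\beta<\beta_N^G$. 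The lower endpoint $\beta_N^{cr}$ would be identified, via a perturbative comparison with the XY critical point, with the temperature at which this quasi-XY phase dissolves into a unique disordered state.

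The hardest part is clearly this intermediate regime: exhibiting an ordered phase of a 3D system with only a \emph{discrete} symmetry that nevertheless carries massless modes and algebraic correlation decay is a quasi-XY, almost-BKT-type scenario in dimension three, and no existing machinery seems to produce it off the shelf. A plausible entry point would be to adapt loop or random-walk representations available for the XY-model by grafting on the $\mathbb{Z}_N$-constraint and bounding its effect quantitatively. Showing that the quasi-XY window is non-empty --- rather than collapsing to a single first-order transition separating an exponentially ordered phase from a disordered one --- is the real crux of Conjecture \ref{h1}, and is where I expect essentially all of the work to concentrate. The complementary high-temperature argument for $\beta<\beta_N^{cr}$, by contrast, should be routine once the interaction is weak enough for standard cluster or Dobrushin methods to apply.
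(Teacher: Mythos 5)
The statement you are addressing is labelled a \emph{conjecture} in the paper, and the authors supply no proof of it; they explicitly say that for the clock model ``the proofs are missing.'' So there is no argument of the authors to compare yours against, only their heuristic discussion, which your proposal faithfully reproduces: the Peierls/Pirogov--Sinai scale $\beta_N^{PS}\to\infty$, the expected Goldstone-mode window, the analogy with the 2D Fr\"ohlich--Spencer result, and the reflection-positivity input of Theorem \ref{t1}. What you have written is a research program, not a proof, and you candidly say so yourself. Since the task is to assess whether the argument closes, the answer is that it does not.

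The decisive gap is exactly where you locate it, but it is worth being precise about why the tools you name cannot currently bridge it. First, the Fr\"ohlich--Spencer multiscale analysis is intrinsically two-dimensional: it rests on the duality between the 2D clock/XY models and a Coulomb gas of \emph{point} vortices, whose logarithmic interaction drives the BKT dichotomy. In 3D the topological defects are vortex \emph{loops} and $\mathbb{Z}_N$ domain \emph{walls}; no rigorous renormalization of such extended defects exists, so ``integrating out the spin waves and running a polymer expansion on the defects'' is not an available step. Second, for the intermediate window the hard direction is the \emph{lower} bound, i.e.\ showing that truncated correlations decay no faster than $|x|^{-(D-2)}$. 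Gaussian domination and infrared bounds give upper bounds on $\langle\hat\sigma(p)\hat\sigma(-p)\rangle$ and hence existence of long-range order; they do not produce pointwise two-sided algebraic decay. For a genuinely continuous symmetry a Goldstone-type argument forces slow transverse decay in the broken phase, but the clock model has only a discrete symmetry, so no such theorem applies, and your proposed ``effective mass controlled by the residual $\mathbb{Z}_N$ pinning'' is precisely the quantity nobody knows how to bound from above by zero in the claimed window. Without a mechanism that rules out the alternative scenario --- a single crossover from an exponentially massive ordered phase directly to the disordered phase, with the window $(\beta_N^{cr},\beta_N^{G})$ empty --- the central assertion of Conjecture \ref{h1} remains open. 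The high-temperature regime $\beta<\beta_N^{cr}$ is indeed standard, but only for $\beta$ small uniformly in $N$; identifying $\beta_N^{cr}$ with the actual onset of uniqueness is itself part of what is conjectured.
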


%Note that by Theorem $\ref{t1}$ we have $\beta_{N}^{cr}<\beta_{0}$ for all
%$N$ large enough, and so the interval of $\beta$'s, where the intermediate
%phases do exist, is non-empty.
We know from \cite{ABF} that there is no intermediate phase for the Ising
model, hence $\bar{N}>2$. The conjectured behavior is somewhat similar to the
one for 2D Clock models; it was proven in \cite{FS} that these indeed undergo
a Berezinskii-Kosterlitz-Thouless phase transition. There, massless Goldstone
modes appear (hence our notation $\beta_{N}^{G}$).
%\newline\textbf{Actually we know more.  , see the
%papers \cite{pfister,frohpfist}:......(all translation-invariant stationary
%distributions are Gibbsian, all translation invariant distributions are
%rotation invariant at almost all temperatures. There are Dobrushin
%states......(but not for XY...}

\section{Conjectures for the driven Clock-model}

\label{clock}

We now discuss the situation with non-zero drift $d,$ (where we do not have
detailed balance).
%For this let us denote by $\delta_{k}$ the measure
%that gives weight $1$ to the configuration $\sigma_{x}\equiv e^{2\pi ki/N}%
%\in\mathbb{C}^{1},\ k=1,...,N.$
Let us run our $N$-Clock model with a drift for a time duration $T,$ starting
in one of the equilibrium phases $\langle\cdot\rangle_{\zeta_{k},\beta}.$ Let
us denote the resulting state by $\left\langle \cdot\right\rangle _{\beta
,d}^{k,T}.$

\begin{conjecture}
\label{h2} For every $N\geq\bar{N},$ $\beta>\beta_{N}^{cr}$ there exists a
critical value $d_{cr}\left(  \beta,N\right)  $ of the drift, $0<d_{cr}\left(
\beta,N\right)  ,$ such that the following holds:

\begin{enumerate}
\item if $|d|<d_{cr}\left(  \beta,N\right)  ,$ then the state $\left\langle
\cdot\right\rangle _{\beta,d}^{k,T}$ approaches the state $\left\langle
\cdot\right\rangle _{\beta,d}^{k},$ as $T\rightarrow\infty$, which is
magnetized:
\[
\left\langle \sigma_{0}\right\rangle _{\beta,d}^{k}\neq0
\]
and is close to $\langle\cdot\rangle_{\zeta_{k},\beta}$ for small $d,$ in the
sense of expectations of local observables;

\item if $d>d_{cr}\left(  \beta,N\right)  ,$ then the state $\left\langle
\cdot\right\rangle _{\beta,d}^{k,T}$ is a `rotating' state as $T\rightarrow
\infty$ (in particular, it has no limit as $T\rightarrow\infty$). Namely,
there exist two periodic functions: $m\left(  T\right)  =m\left(
T;\beta,N,d\right)  >0\ $ and $\Phi\left(  T\right)  =\Phi\left(
T;\beta,N,d\right)  ,$ i.e.%
\[
m\left(  T+\omega\right)  =m\left(  T\right)  ,\ \Phi\left(  T+\omega\right)
=\Phi\left(  T\right)  ,
\]
with period $\omega$ being the mean angular velocity, $\omega=\omega\left(
\beta,N,d\right)  ,$ and a phase shift $\phi_{k}=\phi_{k}\left(
\beta,N,d\right)  ,$ such that
\begin{equation}
\left\vert \left\langle \sigma_{x}\right\rangle _{\beta,d}^{k,T}-m\left(
T\right)  e^{i\left(  \Phi\left(  T\right)  +\phi_{k}\right)  }\right\vert
\rightarrow0\text{ as }T\rightarrow\infty. \label{ro}%
\end{equation}
(Here we again are treating the spin $\sigma_{x}$ as belonging to
$\mathbb{C}^{1}.$)
\end{enumerate}
\end{conjecture}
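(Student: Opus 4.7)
The plan is to split the argument according to the two regimes. For the small-drift regime $|d| < d_{cr}$, I would treat the driven generator as a perturbation of the reversible equilibrium generator whose invariant measure on the $k$-th ergodic component is $\langle \cdot \rangle_{\zeta_k,\beta}$. Writing the generator as $L_0 + dK$, one seeks a stationary measure $\mu_d = \mu_0 + d\mu_1 + \ldots$ and solves the resulting Poisson equations against the reversible dynamics. Under exponential decay of truncated correlations inside the $\zeta_k$-phase, available for $\beta > \beta_N^G$ assuming Conjecture \ref{h1}, this expansion can be controlled by cluster or Duhamel techniques because the Poisson equation has a bounded inverse on local mean-zero observables. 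In the intermediate Goldstone regime $\beta_N^{cr} < \beta < \beta_N^G$ the argument must be adapted: the longitudinal excitations around $\zeta_k$ remain massive while the transverse fluctuations of the order parameter are soft, so one must split observables into a collective angular part and massive excitations and control each separately; convergence $\langle \cdot \rangle_{\beta,d}^{k,T} \to \langle \cdot \rangle_{\beta,d}^{k}$ should then follow from relative entropy or spectral estimates.

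For the large-drift regime $d > d_{cr}$, the strategy is to identify the collective phase $\Phi(T)$ as the slow degree of freedom. I would first analyze the mean-field model \cite{GG} to extract a limit cycle with angular velocity $\omega$ and magnetization profile $m(T)$. Then, to transfer the picture to the lattice, one introduces a block-averaged order parameter and writes the lattice dynamics as a slow rotation of $\Phi$ driven by $d$, coupled to Gaussian fluctuations of the local deviations; the slow-fast small parameter is the inverse block size. A closed equation for $\Phi(T)$ should emerge with the same limit-cycle structure as in the mean-field case, and the discrete rotational symmetry $\Phi \mapsto \Phi + 2\pi/N$ combined with the relabeling $k \mapsto k+1$ forces the advertised periodicity. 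The threshold $d_{cr}$ itself would be identified by matching: the perturbative expansion fails when the free-energy barrier between adjacent wells is overcome by the drift-induced tilt, and above this scale the rotating picture takes over.

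The hardest step, by far, is establishing persistence of macroscopic coherence in the rotating regime: one must rule out that distant spatial regions dephase into a statistical mixture of $\Phi$-values, which would destroy the periodic functions $m(T)$ and $\Phi(T)$. This is the problem of showing that the collective phase of a driven spatially extended system with soft Goldstone modes survives as a single rigid slow degree of freedom with sub-extensive fluctuations. No general technique handles this outside of mean-field examples such as \cite{GG} and \cite{RSV}; a non-perturbative multi-scale analysis, probably combined with reflection positivity in the transverse sector inherited from the XY-interaction, would be required. This is why the statement must remain a conjecture at present.
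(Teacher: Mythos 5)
The statement you are trying to prove is Conjecture \ref{h2} of the paper; the authors offer no proof of it, and explicitly say that Section \ref{clock} is ``mostly a collection of conjectures in which we firmly believe but where the proofs are missing.'' So there is nothing in the paper to compare your argument against step by step. What you have written is a research programme, not a proof, and you say so yourself in your last sentence; judged as a programme it is sensible in outline, but every major step has a genuine gap. For part 1, the expansion $\mu_d=\mu_0+d\mu_1+\cdots$ requires inverting $L_0$ on mean-zero local observables in the pure phase $\langle\cdot\rangle_{\zeta_k,\beta}$, i.e.\ a spectral gap or at least summable decay of the semigroup for the infinite-volume Glauber-type dynamics restricted to a single low-temperature phase. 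This is not known for the clock model (it is a hard open problem already in equilibrium for phase-coexistence regimes), and on top of that your input hypothesis of exponential decay of truncated correlations is itself only Conjecture \ref{h1}. Stationary-state perturbation theory for non-reversible perturbations of infinite interacting particle systems is precisely the kind of statement that nobody knows how to control in the phase-transition region. Your plan to ``adapt'' this to the Goldstone window $\beta_N^{cr}<\beta<\beta_N^G$ also runs against the paper's own Conjecture \ref{h3}, which asserts $d_{cr}=0$ there: in that regime the authors expect \emph{no} massive stationary state for any $d>0$, so a perturbative construction of one would be proving the wrong thing.

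For part 2, the two steps you flag --- transferring the mean-field limit cycle of \cite{GG} to the lattice, and ruling out dephasing of the collective phase across distant regions --- are indeed the crux, and you leave both entirely open. It is worth seeing why the authors \emph{can} prove the analogous statement for the XY-model (Theorem \ref{maint}) but not here: for the $dXY$-model the drift part of the generator commutes exactly with $L_0$, so the driven dynamics is the equilibrium dynamics composed with a rigid rotation, and the rotating states come for free from the known continuum of Gibbs states \cite{FSS}. The $N$-clock drift, encoded in the asymmetric rates $p_\pm$ of \eqref{02}, admits no such exact commutation, which is exactly why the coherence-persistence problem cannot be bypassed and why the statement remains a conjecture. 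Your proposal correctly identifies the obstruction but does not overcome it.
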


It is interesting to compare the curve of the states $\left\langle
\cdot\right\rangle _{\beta,d}^{k,T},$ $T\geq0,$ with the conjectured
(\cite{EKO}) pure phases of the symmetric clock-model, $\left\langle
\cdot\right\rangle _{\zeta,\beta},$ $\zeta\in\mathbb{S}^{1}.$ One cannot be
stopped from guessing that perhaps for every $T$ we have $\left\langle
\cdot\right\rangle _{\beta,d}^{k,T}=\left\langle \cdot\right\rangle
_{\zeta\left(  T\right)  ,\beta},$ for some $\zeta\left(  T\right)
\equiv\zeta\left(  T,k,d,\beta\right)  \in\mathbb{S}^{1}.$

The next conjecture deals with the behavior of the critical drift $d_{cr}$
introduced above.

\begin{conjecture}
\label{h3} The critical drift is positive at low temperatures. It decreases to
become zero at $\beta_{N}^{G}$:
\[
d_{cr}(\beta,N)=0\text{ for }\beta_{N}^{cr}<\beta<\beta_{N}^{G}.
\]

\end{conjecture}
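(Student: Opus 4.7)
The conjecture has two halves that rely on opposite mechanisms. For $\beta>\beta_N^G$ the extremal phases are massive, so small drift cannot dislodge the magnetization direction; hence $d_{cr}>0$. For $\beta_N^{cr}<\beta<\beta_N^G$ the soft Goldstone modes of Conjecture \ref{h1} remove the restoring force, so any $d>0$ forces the magnetization to rotate and $d_{cr}=0$. I would attack the two halves separately.

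For the low-temperature half my plan is a small-drift cluster expansion around equilibrium. By Conjecture \ref{h1}, each extremal phase $\langle\cdot\rangle_{\zeta_k,\beta}$ has exponentially decaying truncated correlations, and in particular a positive free-energy gap against flipping the local magnetization between neighboring clock sectors. For $d$ sufficiently small one would construct the stationary state of the driven dynamics in the basin of $\zeta_k$ as a convergent expansion in $e^{-\beta}$ and $d$, in the spirit of Pirogov--Sinai and of the Holley--Stroock reversible construction, but with detailed balance replaced by a nonequilibrium Peierls-type bound controlling the probability of large contours under the driven rates \eqref{02}. Positivity of $d_{cr}$ then follows because the perturbed state remains magnetized, realizing case 1 of Conjecture \ref{h2}.

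For the intermediate half the strategy is adiabatic. Adopt the EKO strengthening invoked in Conjecture \ref{h1}, so that the symmetric model has a continuum of pure phases $\mu_\zeta$, $\zeta\in\mathbb{S}^1$. Compute, to leading order in $d$, the mean torque that the driven generator exerts on the slow mode $\Phi=\arg\zeta$ evaluated in $\mu_\zeta$. Rotational invariance of the energy \eqref{01v} together with the $\mu_\zeta$-stationarity of the $d=0$ generator forces this torque to vanish at $d=0$, and its first derivative is a susceptibility $\chi(\beta,N)$ that, because the soft mode is genuinely massless in this regime, is strictly positive and finite. Hence the candidate angular velocity $\omega(d)=\chi d+O(d^2)$ is nonzero for every $d>0$. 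Combined with an adiabatic tracking argument showing that the driven evolution started at $\mu_{\zeta_k}$ stays close to $\mu_{\zeta_k e^{i\omega T}}$ up to transverse fluctuations relaxing on a finite time scale, one recovers case 2 of Conjecture \ref{h2} for arbitrarily small $d$, which is precisely $d_{cr}=0$.

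The main obstacle is not the heuristic but its prerequisites. Conjecture \ref{h3} genuinely presupposes Conjecture \ref{h1} in its EKO-strengthened form, neither of which is known. Even granting that equilibrium picture, implementing the adiabatic tracking on the infinite 3D lattice is hard in nonequilibrium: reflection positivity and infrared bounds fail once detailed balance is broken, so the FILS techniques underlying Theorem \ref{t1} cannot be transplanted directly. A plausible intermediate step is to first extract the $N\to\infty$ version from the driven XY-model analysis of Section \ref{cir}, where the Goldstone mode is exactly continuous, and then control the clock discretization perturbatively in $1/N$ via a Villain- or block-spin representation on the scales where the continuous rotational symmetry is approximately restored.
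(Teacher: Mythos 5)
The statement you are asked to prove is labeled a \emph{conjecture}, and the paper offers no proof of it: the only justification given is the one-sentence rationale following the conjecture, namely that for $\beta<\beta_N^G$ the clock model enters the Goldstone-mode regime and should behave like the $XY$-model, which (by Theorem \ref{maint}) rotates for every nonzero drift. Your proposal reproduces exactly this intended mechanism --- massive excitations pin the phase for $\beta>\beta_N^G$, soft modes let it rotate below --- so you have correctly identified the authors' reasoning. But what you have written is a research program, not a proof, and it does not close the conjecture any more than the paper does.

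The concrete gaps are the ones you partly flag yourself, plus one you understate. First, everything is conditional on Conjecture \ref{h1} (and for the intermediate regime on its EKO strengthening to a continuum of phases), which is open even in equilibrium. Second, for the low-temperature half, a Pirogov--Sinai-type expansion for the stationary measure of the driven dynamics \eqref{02} does not currently exist, and even if it did, the paper stresses that PS methods only reach $\beta\ge\beta_N^{PS}$ with $\beta_N^{PS}\to\infty$ as $N\to\infty$; since the massive regime is claimed to be all of $\beta>\beta_N^G$, and $\beta_N^G$ is itself only conjecturally defined, an expansion argument cannot cover the full range where you need $d_{cr}>0$. Third, the adiabatic tracking step --- showing the driven evolution started from $\mu_{\zeta_k}$ stays uniformly close to $\mu_{\zeta_k e^{i\omega T}}$ on the infinite lattice --- is precisely the hard part; nothing analogous is proven even for the $dXY$-model, where Theorem \ref{maint} gets rotating states for free because the drift generator exactly commutes with $L_0$, a structural identity that fails for finite $N$ because the discrete rotation $\varphi_x\to\varphi_x\pm 2\pi/N$ is not a symmetry of the drifted rates in the same way. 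Your closing suggestion, to start from the $dXY$ result of Section \ref{cir} and control the clock discretization in $1/N$, is the most promising route and is consistent with Conjecture \ref{h4}, but the discretization step is exactly the open EKO-type problem, so it relocates the difficulty rather than resolving it.
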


The rationale behind this conjecture is that at temperatures above $\left(
\beta_{N}^{G}\right)  ^{-1}$ the $N$-Clock model enters into the spin-wave
phase or Goldstone modes regime and so qualitatively should behave like the
$XY$-model, which is in the rotating phase for any non-zero value of drift,
see below. This similarity of the intermediate phases with the $XY$-model is
the basis of all our speculations. Hence, we believe in the following

\begin{conjecture}
\label{h4} For every $\beta$ large enough there exists $N=N\left(
\beta\right)  ,$ such that for any $N\geq N\left(  \beta\right)  $ and for all
$d>0$ the $N$-Clock model with a drift $d$ has a continuum of different
rotating states.
\end{conjecture}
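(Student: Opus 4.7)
The plan is to combine a reflection-positivity argument for the time-sliced states with a comparison to the driven XY-model of Section \ref{cir}. Assuming the main XY result (Theorem \ref{maint}) and the conjectured intermediate-phase structure (Conjecture \ref{h1}), the idea is that for $N \geq N(\beta)$ the $N$-clock model with any drift $d>0$ should lie inside the Goldstone regime $\beta < \beta_N^G$, where the spins behave essentially as continuous rotators and the pinning potential coming from the $N$-fold symmetry is washed out by spin waves. The continuum of rotating states should then be obtained as an $N \to \infty$-stable lift of the XY rotating family parametrized by the initial phase $\phi \in \mathbb{S}^1$.

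More concretely, I would first fix $\phi \in \mathbb{S}^1$, choose a sequence $k_N/N \to \phi/(2\pi)$, and consider the trajectory $\langle \cdot \rangle_{\beta,d}^{k_N,T}$ obtained by evolving the equilibrium phase $\langle \cdot \rangle_{\zeta_{k_N},\beta}$ under the driven dynamics with rates \eqref{02}. The first step is to establish, uniformly in $T$ and $N$, an infrared bound on the two-point function $\langle \sigma_x \bar{\sigma}_y \rangle_{\beta,d}^{k_N,T}$. Since the driven dynamics breaks detailed balance, the standard FILS reflection positivity does not apply to the time-evolved state directly; instead one would decompose the generator into its symmetric part (for which reflection positivity holds) and an antisymmetric drift part whose magnitude is controlled by $d$, and use a perturbative/Duhamel argument to propagate the bound forward in time. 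This gives long-range order at every time slice, hence a non-zero instantaneous magnetization $m(T)e^{i(\Phi(T)+\phi_k)}$ as in \eqref{ro}.

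The second step is to identify the rotating law of $\Phi(T)$. Passing to a subsequential limit, the time-dependent family of states becomes invariant under a one-parameter group of combined rotation-in-phase and time-translation, with mean angular velocity $\omega(\beta,N,d)>0$. Distinctness of the states for different initial $\phi$ is captured by the time-average of $\langle \sigma_0 \rangle$ along the trajectory: this average equals $0$ only in the symmetric stationary measure, while along a rotating trajectory its phase tracks $\phi$ modulo $\omega$. Thus two different values of $\phi$ must produce different rotating trajectories, establishing the claimed continuum.

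The main obstacle is the second step: proving that the drift actually produces rotation rather than pinning the magnetization direction to one of the $N$ preferred values. Equivalently, one must rule out a critical drift $d_{cr}(\beta,N)>0$ persisting for large $N$, which is exactly the content of Conjecture \ref{h3}. The mechanism one wants is that in the spin-wave regime the effective potential for the collective phase $\Phi$ generated by the $N$-fold anisotropy decays at least as fast as a power of $N$ (by the absence of long-range order on the full $\mathbb{S}^1$ group at these temperatures for finite $N$), while the drift contributes an order-one torque; hence for $N \geq N(\beta)$ any $d>0$ overcomes the pinning. Making this quantitative — presumably via an expansion of the generator in the collective-mode/spin-wave variables in the spirit of \cite{FS}, together with an a priori estimate that the transverse fluctuations remain controlled throughout the rotation — is where the real difficulty lies, and is why the statement is offered as a conjecture rather than a theorem.
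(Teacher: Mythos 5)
The statement you are asked about is an explicit \emph{conjecture}: the paper offers no proof of it. Its only support in the text is the heuristic stated after Conjecture \ref{h3}: since $\beta_N^G\to\infty$ as $N\to\infty$, for any fixed large $\beta$ one can take $N(\beta)$ so that the model sits in the spin-wave/Goldstone regime $\beta<\beta_N^G$, where it should behave like the $XY$-model, for which Theorem \ref{maint} actually establishes rotating states. Your proposal is consistent with that rationale and you are candid that the central step is missing, so there is no question of your having overlooked an argument that the paper supplies. Still, two points in your sketch deserve comment.

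First, the mechanism behind the $XY$ theorem is not an infrared bound propagated in time but the exact algebraic identity $[L,L_0]=0$: the driven dynamics is the undriven dynamics composed with a rigid rotation of all spins, so the equilibrium phases $\langle\cdot\rangle_{\zeta,\beta}$ are carried to $\langle\cdot\rangle_{\zeta+dt,\beta}$. This identity has no analogue for finite $N$ (the discrete jumps with asymmetric rates do not factor into ``equilibrium clock dynamics plus rigid rotation''), which is precisely why the clock statements remain conjectures. Your step 1 --- controlling the antisymmetric part of the generator by Duhamel and propagating reflection-positivity bounds under a non-reversible dynamics --- is therefore not a technical preliminary but the heart of the open problem; no version of FILS-type infrared bounds is known to survive time evolution under a non-reversible generator.

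Second, your parametrization of the continuum is problematic at fixed $N$. Choosing $k_N/N\to\phi/(2\pi)$ only produces a continuum in the limit $N\to\infty$; for each fixed $N\ge N(\beta)$ Theorem \ref{t1} supplies only $N$ equilibrium starting phases, hence at most $N$ trajectories. The continuum the conjecture refers to must come either from the time-slices of a single limit cycle (each state $\langle\cdot\rangle^{k,T}_{\beta,d}$, $T\in[0,\omega)$, being itself a rotating state, in parallel with $\langle\cdot\rangle_{\zeta+dT,\beta}$, $\zeta\in\mathbb{S}^1$, in the $XY$ case) or from the conjectured continuum of pure phases $\langle\cdot\rangle_{\zeta,\beta}$, $\zeta\in\mathbb{S}^1$, of \cite{EKO} in the intermediate regime. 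Your distinctness argument via the time-average of $\langle\sigma_0\rangle$ also needs care: averaged over a full period that quantity vanishes, so distinctness of trajectories should instead be read off from the phase offset $\phi_k$ in \eqref{ro}.
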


One might wonder whether there is a difference between the structure of the
stationary states in the Pirogov-Sinai regime and in the Goldstone modes
regime, when $d>d_{cr}\left(  \beta,N\right)  ,$ i.e., when we are in the
regime of rotating states. We expect the answer to be positive:

\begin{conjecture}
\label{h5}

\begin{enumerate}
\item \textbf{Rotating soft modes.} In the regime $\beta\in\left(  \beta
_{N}^{cr},\beta_{N}^{G}\right)  $, $d>0$ there is a unique stationary
distribution, $\left\langle \cdot\right\rangle _{\beta,d}^{st}$. It is
translation invariant and has zero magnetization. It is given by the limit%
\begin{equation}
\left\langle \cdot\right\rangle _{\beta,d}^{st}=\lim_{T\rightarrow\infty}%
\int_{T}^{T+\omega}\left\langle \cdot\right\rangle _{\beta,d}^{k,T}%
\ dT\label{13}%
\end{equation}
(which does not depend on $k$).

\item \textbf{Rotating PS. }In the regime $\beta>\beta_{N}^{G},$
$d>d_{cr}\left(  \beta,N\right)  $, in addition to the time-stationary
translation invariant state $\left(  \ref{13}\right)  $ there are also
time-stationary non-translation invariant states (the `Dobrushin' states).
They are given by the same formula $\left(  \ref{13}\right)  ,$ where instead
of the states $\left\langle \cdot\right\rangle _{\beta,d}^{k,T}$ one should
use the states $\left\langle \cdot\right\rangle _{\beta,d}^{\pm k,T}.$ The
latter are obtained by starting the driven $N$-Clock dynamics with the measure
$\delta_{\pm k}$ that gives weight $1$ to the configuration%
\[
\sigma_{x=\left\{  x_{1},x_{2},x_{3}\right\}  }=\left\{
\begin{array}
[c]{cc}%
e^{2\pi ki/N} & \text{ for }x_{1}\geq0\\
e^{2\pi\left(  k+\frac{N}{2}\right)  i/N} & \text{ for }x_{1}<0
\end{array}
\right.  ,
\]
and where we suppose for simplicity that $N$ is even.
\end{enumerate}
\end{conjecture}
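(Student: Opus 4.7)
I would attack parts (1) and (2) in parallel: both rest on the time-average (\ref{13}) applied to different initial laws, and the real difficulty in both cases is not the construction but the characterization. First, given Conjecture \ref{h2} I would show that $\left\langle \cdot\right\rangle_{\beta,d}^{k,T}$ tested against any local observable is asymptotically periodic of period $\omega$, so the period-long integral in (\ref{13}) is Cauchy in $T$ and defines a limit measure. Translation invariance of the limit is inherited from $\left\langle \cdot\right\rangle_{\zeta_k,\beta}$ and the translation invariance of the dynamics, while time-stationarity is automatic because the integration runs over exactly one period.

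For the structural claims of part (1), the zero-magnetization statement follows from (\ref{ro}): the magnetization equals, up to a vanishing error, the orbit integral $\omega^{-1}\int_{0}^{\omega} m(T)\,e^{i(\Phi(T)+\phi_k)}\,dT$, which vanishes because $\Phi$ winds once around $\mathbb{S}^{1}$ over a period of the rotating state. Independence of $k$ follows from the global $\mathbb{Z}_{N}$-symmetry of the driven dynamics, which sends $\left\langle \cdot\right\rangle_{\zeta_k,\beta}$ to $\left\langle \cdot\right\rangle_{\zeta_{k+1},\beta}$ and shifts $\phi_k$ by $2\pi/N$; integration over the full period of $\Phi$ kills the shift. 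The main obstacle is uniqueness of $\left\langle \cdot\right\rangle_{\beta,d}^{st}$ among translation-invariant stationary distributions. My strategy would be: work in the soft-modes regime using the EKO continuum $\left\langle \cdot\right\rangle_{\zeta,\beta}$, $\zeta\in\mathbb{S}^{1}$, of equilibrium phases; argue that the drift acts on this orbit as a constant-speed rotation; decompose any translation-invariant stationary state along its translation-ergodic components; show that each extremal component must concentrate on the rotating orbit; and finally use the fact that the only rotation-invariant law on $\mathbb{S}^{1}$ is Lebesgue to collapse the decomposition to the single limit (\ref{13}). Extracting this attractor property from Conjecture \ref{h2} itself, and ruling out exotic translation-ergodic stationary states not of this rotating form, is the genuinely hard input.

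Part (2) differs only in the initial data. Applied to $\delta_{\pm k}$ the same time-averaging should yield a stationary state translation-invariant in the two horizontal directions but not the vertical, provided one can prove a nonequilibrium analog of the Dobrushin interface-rigidity theorem: in the regime $\beta>\beta_N^G$, $d>d_{cr}(\beta,N)$, the dynamical interface stays confined to a horizontal slab of bounded expected width. I would attempt this via a contour and cluster-expansion argument in a corotating frame in which both half-spaces rotate synchronously at the mean angular velocity $\omega$; in that frame the drift contributes only to uniform bulk rotation, not to interface motion, so the rigidity question reduces to a reversible-like Peierls estimate below the Goldstone threshold. Making this change of frame rigorous while the pure phases themselves are rotating, and controlling the interplay between interface fluctuations and the local deviations from the mean rotation, is the principal technical obstacle for part (2).
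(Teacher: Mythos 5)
The statement you are addressing is explicitly labelled a \emph{conjecture} in the paper: the authors offer no proof of it (they prove an analogous result only for the continuum $XY$-model, Theorem \ref{maint}, where the argument hinges on the exact identity $[L,L_{0}]=0$). Your text is likewise not a proof but a programme, and you candidly defer the decisive steps yourself (``the genuinely hard input'', ``the principal technical obstacle''). So the honest verdict is that the gap is not closed; let me point at where it is widest.

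First, everything you write is conditional on Conjecture \ref{h2} (asymptotic periodicity of $\langle\cdot\rangle_{\beta,d}^{k,T}$) and, for uniqueness, on the EKO continuum of pure phases --- both themselves open. Second, and more structurally: the device that makes the $XY$-case tractable, namely passing to the corotating frame $\psi_{x}(t)=\varphi_{x}(t)-dt$, is \emph{not available} for the $N$-clock model, because a rotation by $dt$ is not a map of $\mathbb{Z}_{N}$ into itself and the driven generator does not split into an equilibrium part plus a commuting uniform rotation. Your part (2) argument (``in that frame the drift contributes only to uniform bulk rotation, not to interface motion, so the rigidity question reduces to a reversible-like Peierls estimate'') silently assumes this exact decomposition; without it one must control a genuinely nonreversible interface dynamics, and no Peierls or cluster-expansion technology currently exists for that. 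The same issue undercuts your $k$-independence argument in part (1), which identifies the global $\mathbb{Z}_{N}$ spin rotation with a time shift by $\omega/N$: that identification holds only at the level of the leading-order formula \eqref{ro}, not for the full measures. Finally, your zero-magnetization step needs the additional (unstated in Conjecture \ref{h2}) input that $\Phi$ has winding number one per period; periodicity of $\Phi$ alone does not force $\int_{0}^{\omega}m(T)e^{i\Phi(T)}\,dT=0$. As a heuristic roadmap your proposal is consistent with the authors' own motivation (the analogy with the $XY$-model and the Fr\"ohlich--Pfister rigidity/non-rigidity results), but none of the hard analytic content is supplied.
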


The Dobrushin time-stationary non-translation invariant states have a rigid
interface at the level $x_{1}=0$. In a typical configuration drawn from such a
state the spins on different sides of the interface are pointing in
(approximately) opposite directions, though the direction itself can be
arbitrary. One should remember here that no Dobrushin states exist in the 3D
XY-model, as shown in \cite{FP}; see also the discussion in \cite{SV}. In
fact, it is argued in \cite{FP} that there are no non-translation invariant
states at all in the 3D XY-model. This is the basis of our Conjecture
\ref{h5}. More precisely, we believe that in all cases there is a unique
translation invariant stationary distribution for $\beta>\beta_{N}%
^{cr},d>d_{cr}(\beta,N)$.

\section{The three dimensional $XY$-model: main result}

\label{cir} The dynamical $XY$-model with a drift -- called $dXY$ model below
-- can be obtained from the energy $\left(  \ref{01v}\right)  ,$ and the
dynamics $\left(  \ref{02}\right)  $ by taking the limit $N\rightarrow\infty,$
in a diffusive rescaling of time by $t\rightarrow t/N^{2}$. Alternatively, it
is a 3D model of coupled Brownian motions $\varphi_{x},x\in\mathbb{Z}^{3}$, on
circles $\varphi_{x}\in\mathbb{S}^{1}\subset\mathbb{C}.$ The Brownian motions
$\varphi_{x}$ have a constant drift, $d,$ and they are interacting via the
nearest neighbor attraction $\left(  \ref{01v}\right)  .$ The dynamics for
$\varphi_{x}(t)\in\lbrack0,2\pi]$ is then as follows: modulo $2\pi$,
\begin{equation}
\text{d}\varphi_{x}(t)=d\ \text{d}t-\frac{\partial H}{\partial\varphi_{x}%
}\ \text{d}t+\sqrt{\frac{2}{\beta}}\,\text{d}W_{x}(t), \label{coudriv}%
\end{equation}
where the $W_{x}(t)$ are independent standard Wiener processes and
\[
\frac{\partial H}{\partial\varphi_{x}}=\sum_{y:y\sim x}\sin(\varphi
_{x}(t)-\varphi_{y}(t))
\]
The formal generator of this process, acting on local smooth functions $f$ is
\begin{align}
Lf  &  =L_{0}f+d\,\sum_{x}\frac{\partial f}{\partial\varphi_{x}}%
\nonumber\label{gene}\\
L_{0}f  &  =\sum_{x}\big[-\frac{\partial H}{\partial\varphi_{x}}%
\,\frac{\partial f}{\partial\varphi_{x}}+\frac{1}{\beta}\,\frac{\partial^{2}%
f}{\partial\varphi_{x}^{2}}\big]
\end{align}
Observe that $L_{0}$ commutes with the new generator for the driven model:
$[L,L_{0}]=0$ because for all $x$,
\[
\sum_{y}\frac{\partial^{2}H}{\partial\varphi_{x}\partial\varphi_{y}}=0
\]
Of course $L-L_{0}$ generates independent rotation on each angle with angular
speed $d$ and thus commutes with the generator $L_{0}$ of the (undriven)
$XY-$model. In other words, the dynamics of the $XY-$model can be interchanged
with uniform rotation of all spins.\textbf{ }

Much of the equilibrium structure of the $XY$-model is known. At low
temperatures $\beta^{-1}$ the 3D $XY$-model has a continuum of
translation-invariant Gibbs states. They can be obtained as thermodynamic
limits $\left\langle \cdot\right\rangle _{\zeta,\beta}$ of the finite-volume
Gibbs states with coherent boundary conditions $\varphi_{x}\equiv\zeta\in$
$\mathbb{S}^{1}.$ outside the volume. These translation invariant states have
non-zero spontaneous magnetization,
\[
\left\langle \varphi_{0}\right\rangle _{\zeta,\beta}=m\left(  \beta\right)
\zeta,\ \text{with }m\left(  \beta\right)  >0,
\]
see \cite{FSS}.

Here is our main result. Consider the set $\mathcal{S}$ of stationary and
translation-invariant distributions for the $dXY-$model.

\begin{theorem}
\label{maint} The set $\mathcal{S}$ is a singleton for almost all
temperatures, while at sufficiently low temperatures there exist rotating
states as in Conjecture \ref{h2}.2: $\langle\cdot\rangle_{\beta,d}^{\zeta
,T}=\langle\cdot\rangle_{\zeta+dT,\beta}$.
\end{theorem}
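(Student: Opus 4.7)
The plan is to exploit the commutation $[L_0,R]=0$ noted above, where $R=\sum_{x}\partial/\partial\varphi_{x}$ generates uniform rotation of all spins. This identity makes the driven semigroup factor, $e^{tL}=e^{tL_0}\,e^{tdR}$, and dually a translation-invariant measure $\mu$ lies in $\mathcal{S}$ if and only if its undriven XY-evolution coincides with a rigid rotation:
\[
e^{tL_0^{*}}\mu \;=\; R_{-dt}^{*}\mu\qquad\text{for every }t\ge 0.
\]
The existence part of the theorem is then immediate: the coherent Gibbs state $\langle\cdot\rangle_{\zeta,\beta}$ provided by \cite{FSS} is $L_0$-stationary, and rotation by $\theta$ carries $\langle\cdot\rangle_{\zeta,\beta}$ into $\langle\cdot\rangle_{\zeta+\theta,\beta}$ (because $H$ is rotation-invariant and the boundary condition transforms covariantly), so the driven semigroup evolves $\langle\cdot\rangle_{\zeta,\beta}$ into $\langle\cdot\rangle_{\zeta+dT,\beta}$ as claimed.

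For uniqueness in $\mathcal{S}$, introduce the reference measure
\[
\mu^{\circ}_{\beta}\;:=\;\frac{1}{2\pi}\int_{0}^{2\pi}\langle\cdot\rangle_{\zeta,\beta}\,d\zeta
\]
(read as the unique Gibbs measure at high temperature); by construction $\mu^{\circ}_{\beta}$ is translation-invariant, rotation-invariant, and an $L_0$-Gibbs state. Fix any $\mu\in\mathcal{S}$ and use the standard specific relative-entropy identity for the reversible generator $L_0$:
\[
h(e^{tL_0^{*}}\mu\,|\,\mu^{\circ}_{\beta})\;=\;h(\mu\,|\,\mu^{\circ}_{\beta})\;-\;\int_{0}^{t}D_{L_0}(e^{sL_0^{*}}\mu\,|\,\mu^{\circ}_{\beta})\,ds,
\]
with the specific Dirichlet form $D_{L_0}\ge 0$ vanishing exactly on $L_0$-stationary measures. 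Combining the first displayed identity for $\mu\in\mathcal{S}$ with rotation-invariance of $\mu^{\circ}_{\beta}$ forces the left-hand side to equal $h(R_{-dt}^{*}\mu\,|\,\mu^{\circ}_{\beta})=h(\mu\,|\,\mu^{\circ}_{\beta})$; hence the time-integral vanishes and $D_{L_0}(\mu\,|\,\mu^{\circ}_{\beta})=0$. Thus $\mu$ is itself a translation-invariant Gibbs measure for the undriven XY-model, and substituting $L_0\mu=0$ back into the first identity yields $R_{-dt}^{*}\mu=\mu$: $\mu$ is also rotation-invariant.

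What remains is to identify all translation- and rotation-invariant Gibbs measures of the 3D XY-model with $\mu^{\circ}_{\beta}$. Using the ergodic decomposition of the translation-invariant Gibbs simplex together with rotational covariance of the pure phases $\langle\cdot\rangle_{\zeta,\beta}$, this reduces to the classification statement that every extremal translation-invariant Gibbs measure has the form $\langle\cdot\rangle_{\zeta,\beta}$ for some $\zeta\in\mathbb{S}^{1}$; this is known for Lebesgue-a.e.\ $\beta$ by the standard differentiability-of-the-free-energy argument, and is precisely what produces the ``almost all temperatures'' clause. The main technical obstacle is the infinite-volume entropy/Dirichlet-form step: one must justify the identity above for the unbounded noncompact XY-generator on $\mathbb{Z}^{3}$ (via the usual truncation and lower-semicontinuity machinery) and verify that $D_{L_0}(\mu\,|\,\mu^{\circ}_{\beta})=0$ indeed forces $\mu$ into the $L_0$-Gibbs simplex; once this is in place the whole argument is essentially algebraic, resting only on $[L_0,R]=0$ and rotation-invariance of the reference measure.
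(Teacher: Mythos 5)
Your proposal follows essentially the same route as the paper: passing to the co-rotating frame (equivalently, the factorization coming from $[L_0,R]=0$), invoking the specific relative-entropy/Dirichlet-form argument — which is precisely Holley's argument that the paper cites — to conclude that any translation-invariant stationary measure is a rotation-invariant Gibbs state of the undriven XY-model, and then appealing to the a.e.-$\beta$ classification of translation-invariant Gibbs states (Pfister) for uniqueness, with the low-temperature coherent phases supplying the rotating states. The only cosmetic difference is that you deduce rotation invariance algebraically from Gibbsianness plus the co-rotation relation, whereas the paper deduces it from the vanishing of the magnetization in stationary states; both are fine.
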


\begin{proof}
We start by repeating that the evolution of the random variables $\psi
_{x}\left(  t\right)  =\varphi_{x}\left(  t\right)  -d\,t$ is that of the
$XY$-model with zero drift. If therefore $\mu$ is a stationary distribution
for the $dXY-$model, then $\mu$ is periodically repeated under the symmetric
$XY$-dynamics. Suppose now that $\mu$ is translation-invariant. Then, Holley's
argument shows that $\mu$ is in fact a translation invariant Gibbs measure for
the XY-model, \cite{Ho}. Moreover, $\mu$ must then be rotation invariant
($\mathbb{S}^{1}$-invariant) since stationary states of the $dXY-$model have
zero magnetization. From \cite{Pf} it then follows that $\mu$ is unique for
almost all temperatures.\newline On the other hand, at low temperatures
$\beta^{-1}$ the 3D $XY$-model has a continuum of translation-invariant Gibbs
states. These phases of the $XY$-model correspond in an evident way to
rotating states of the $dXY-$model. These rotating states then do not converge
to a stationary state of the $dXY$ model.
\end{proof}

\noindent\textbf{Remarks}:\newline1. The Gibbs field $\left\langle
\cdot\right\rangle _{\beta}^{st}$ , defined by
\[
\left\langle \cdot\right\rangle _{\beta}^{st}=\int_{\mathbb{S}^{1}%
}\left\langle \cdot\right\rangle _{\zeta,\beta}d\zeta
\]
is $\mathbb{S}^{1}$-invariant. We believe that for all $\beta$ the state
$\left\langle \cdot\right\rangle _{\beta}^{st}$ is the only
translation-invariant Gibbs state of the $XY$-model, which is $\mathbb{S}^{1}%
$-invariant, which would remove the \textquotedblleft almost
all\textquotedblright. (This statement is proven to hold for the $XY$-model
for almost all values of $\beta,$ \cite{FP}.)\newline

\noindent2. Note, however, that by changing the interaction from $\cos\left(
\varphi_{x}-\varphi_{y}\right)  $ to $\left(  \frac{1+\cos\left(  \varphi
_{x}-\varphi_{y}\right)  }{2}\right)  ^{p}$ -- i.e. by passing to the
so-called `very-nonlinear $\sigma-$model' -- we obtain an example of a system
which at some temperatures has at least two translation-invariant
$\mathbb{S}^{1}$-invariant Gibbs states (for $p$ large enough), see \cite{ES}.

\bigskip

\noindent\textbf{Acknowledgment.} We are grateful to Aernout van Enter for
useful comments. S.S. acknowledges the financial support of Instituut voor
Theoretische Fysica, K.U.Leuven, Belgium during his visit in May, 2011, where
part of this work was done.

\end{document}